\newcommand{\blind}{1}
\newtheorem{theorem}{Theorem}
\newtheorem{definition}{Definition}
\newtheorem{property}{Property}
\newcommand\independent{\protect\mathpalette{\protect\independenT}{\perp}} \def\independenT#1#2{\mathrel{\rlap{$#1#2$}\mkern2mu{#1#2}}}
\newcommand{\dr}{\textnormal{d}}
\begin{document}

\def\spacingset#1{\renewcommand{\baselinestretch}%
{#1}\small\normalsize} \spacingset{1}


\if1\blind
{
 \title{\bf A conditional independence framework for coherent modularized inference}
  \author{Manuele Leonelli \\
   School of Mathematics and Statistics, University of Glasgow
\\
\\
    Martine J. Barons and Jim Q. Smith \\
   Department of Statistics, University of Warwick}
  \maketitle
} \fi

\if0\blind
{
  \bigskip
  \bigskip
  \bigskip
  \begin{center}
    {\LARGE\bf A changepoint approach for the identification of financial extreme regimes}
\end{center}
  \medskip
} \fi

\bigskip
\begin{abstract}
Inference in current domains of application are often complex and require us to integrate the expertise of a variety of disparate panels of experts and models coherently. In this paper we develop a formal statistical methodology to guide the networking together of a diverse collection of probabilistic models. In particular, we derive sufficient conditions that ensure inference remains coherent across the composite before and after accommodating relevant evidence.
\end{abstract}

\noindent%
{\it Keywords:} 
Coherence; Conditional independence; Likelihood separation; Modular inference.

\spacingset{1.45} 
\section{Introduction}
\label{sec:intro}
For large scale applications is now often necessary to integrate many probabilistic components - themselves often massive - into a single coherent system. One of the earliest examples comes from nuclear emergency management. There engineering, athmospheric, economic and social components needed to be integrated to give uncertainty measures of the possible unfoldings of a crisis \citep{French1991}. Then expedience dictated that point estimates of outputs from one component was donated to an adjacent component whilst the uncertainty associated with such estimates was simply discarded. This naive approach of course dramatically affects the coherence of the composite system both for decision making \citep{Leonelli2013,Leonelli2015} and inference \citep{Plummer2015}. Despite of the dangers associated to modular approaches, such practice is still often used, for instance in climate modelling \citep{Beal2010}, air pollution  \citep{Blangiardo2011}, econometrics \citep{Murphy2002} and epidemiology \citep{Li2017}.

Exceptions of course exist. For example some early work for the full integration of different components was developed in artificial intelligence \citep{Johnson2012,Mahoney1996,Xiang2002} although this work exclusively focused on  Bayesian networks. Only recently has the statistical community recognized the need for a more general study of how and when such an integration can be coherently performed \citep{Liu2009}. In the context of graphical models, Markov combinations \citep{Massa2010,Massa2017} and Markov melding \citep{Goudie2018} are coherent frameworks designed to compose models which share some common features. \citet{Jacob2017} characterizes the cases where a modular approach may outperform the inference resulting from a full joint statistical specification.

Here we consider the situation where inference can be exclusively carried out in a modular fashion, where different panels of experts are responsible for their own component of the system which they update autonomously. In the above-mentioned applications, the fields of expertise that need to be brought together are so heterogeneous that no-one could realistically \lq\lq{o}wn\rq\rq{} all the statements about the fully joint distribution unless there existed shared structural assumptions. Even if it were feasible to build a single joint model, typically the components need to be continually revised to accommodate new undersanding, science and data. Any overaching probabilistic model would thus quickly become obsolete since the judgements it embodied would no longer reflect current understanding.

Here we determine when and how expert judgement and data in such a modular framework can be formally and coherently combined, taking into account of the relative strength of different evidence. We demonstrate that under a set of structural assumption often met in practice, inference can be guaranteed to be coherent and also show that the beliefs it embodies can be structured in a way that incorporates  all usable, informed expert judgment and data drawn together from the different components. Furthermore, in a sense we define later, if an appropriate protocol guiding the nature and quality of the data input is in place, then inference can be distributed, meaning that that the system remains coherent when each panel  autonomously  updates its beliefs about its domain of expertise. Distributivity gives the added benefit that any output probability associated with the composite can be typically calculated very quickly \citep{Leonelli2015}.

\section{The conditional independence framework}
\label{sec:framework}

We start setting up a formal framework for the combination of panels' judgements and components. A large vector of random variables, $\bm{Y}$, measures various features of an unfolding future. Suppose that different components of this vector are evaluated and overseen by $m$ different panels of domain experts, $ G_{1},\dots, G_{m} $. Panel $G_{i}$ is responsible for the output vector $\bm{Y}_i$, where $\bm{Y}_i$ takes values in $\mathbb{Y}_i$, $i\in[m]=\{1,\dots,m\}$, and $\bm{Y}=( \bm{Y}_{i})_{i\in[m]} $.  The implicit (albeit virtual) owner of the beliefs of the panels is henceforth referred to as the supraBayesian, meaning that the group reasons as a single person would and it is her coherence that we are concerned with. 

A modular system should endeavour to accommodate probabilistic information provided by the most well-informed experts only. Therefore each panel only donates beliefs associated with their own domain of expertise and not their beliefs about the whole vector $\bm{Y}$. Each panel $G_{i}$ is asked to give the supraBayesian their beliefs about the probability distribution of $ \bm{Y}_{i}$ over which $G_i$ has oversight, conditional on certain measurable functions $\left\{ A_{i}( \bm{Y}):A_{i}\in \mathbb{A}_i\right\} $, where $\mathbb{A}_{i}$ could be null. For instance $\mathbb{A}_{i}$ could be the set of different possible combinations of the covariates' levels on which $\bm{Y}_i$ might depend. 

Henceforth we assume that all panels make their inferences in a parametric setting where $\bm{Y}$ is parametrised by $\bm{\theta }=\left( \bm{\theta}_i\right)_{i\in[m]} \in \Theta$, and $\bm{\theta }_{i}$ parametrises  $G_i$'s sample distributions. When the parameter space can be written as a product space, $\Theta = \times_{i\in[m]}\Theta _{i}$,  where $\Theta_i$ is $G_i$'s parameter space, panels are variationally independent \citep{Dawid1993}. We  assume this property holds, as often in practice. This is in fact a necessary condition for inference to be distributed.

In this parametric setting panels  have two quantities available to them. The first are sample distibutions over future measurements for which they have responsibility 
$
\Pi _{i}^{y|\theta }\triangleq \left\{f_i(\bm{Y}_{i}\ |\ A_i, \bm{\theta }_{i}):\bm{\theta }_{i} \in \Theta _{i}, A_i \in \mathbb{A}_i\right\}
$. For example, if $\bm{Y}$ were discrete and finite, then each panel might be asked to provide certain multi-way conditional probability tables over their subvector $\bm{Y}_{i}$, conditional on each $A_{i}\in \mathbb{A}_{i}$. In this case $\bm{\theta }_{i}$ would be the concatenated probabilities within all these tables. The second are panel densities
$
\Pi _{i}^{\theta }\triangleq \left\{ \pi _{i}\left( \bm{\theta }_{i}\ |\ A_i\right):A_i \in \mathbb{A}_i\right\}$, such as a joint probability distribution over all the probabilities specified in the multi-way conditional probability tables above. Of course panels may want to use various data available to them to infer the distribution of $\bm{\theta}_i$, $i\in[n]$. However each panel typically performs this inference autonomously. 
\begin{definition}
If the supraBayesian's beliefs are functions of the autonomously calculated panels' beliefs we say that inference is distributed.
\end{definition}
If $G_i$'s inferences are going to be inherited by the supraBayesian, it is by no means automatic that such autonomous updating is justified.  We henceforth assume Property \ref{prop} holds.

\begin{property}
\label{prop}
All agree on the variables $\bm{Y}$ defining the process together with qualitative statements about the dependence between $\bm{Y}$ and $\bm{\theta}$. Call these assumptions the structural consensus.
\end{property}

This consensus can often be expressed through an appropriately chosen graphical or conditional independence structure both across  the distribution of $ \bm{Y}\ |\ \bm{\theta } $, and also that of $\bm{\theta }$.  These types of assumptions are often complex but can be represented by a variety of well-known frameworks \citep{Leonelli2015a}. Technically we can think of the structural consensus as the qualitative beliefs that are shared as common knowledge by all the panel members. These are usually expressed in common  language and so are more likely to form part of the structural consensus \citep{Smith1996}.  For the coherence of modular systems, inference also needs to be sound.

\begin{definition}
If by adopting the structural consensus the supraBayesian underlying beliefs about a domain overseen by a panel $G_{i}$ are $\{ \Pi _{i}^{y|\theta },\Pi _{i}^{\theta}\}$, $i\in[m]$, we say that inference is sound.
\end{definition}

\section{Illustrations of some of the inferential challenges}
\label{sec:illustrations}
We next give a flavour of the difficulties associated with the integration of the panels beliefs and the dangers of modular inference when not justified. Consider the simplest scenario where  the number of panels $m=2$ and the structural consensus specifies that $\bm{Y}= (Y_{1},Y_{2})$ are binary. Here the random variable $Y_{1}$ is an indicator of whether or not a foodstuff has become poisonous and  $Y_{2}$ is an indicator of whether or not contamination is detected before the food is distributed to the public. Sample distributions $\Pi_{i}^{y|\theta }$ given by expert panel $G_{i}$ are saturated so that $\theta_{i}\triangleq P(Y_{i}=1)$. Suppose the structural consensus further specifies that $Y_{2}\independent Y_{1}\;|\;(\theta_{1},\theta _{2})$.  This implies that, given the probabilities $\theta_1$ and $\theta_2$, observing poisonous foodstuff does not affect our beliefs about the detection regime. Suppose the main inferential object of interest is 
 $\mathbb{E}(Y_{1}Y_{2})$, an indicator of whether the public is exposed to the contamination. To derive such score the supraBayesian needs to able to calculate $\mathbb{E}\left( \theta_{1}\theta _{2}\right)$. This can be derived from the panel's densities if and only if the independence $\theta _{1}\independent \theta _{2}$ was in the structural consensus. With this additional condition, assume a random vector $\left( \bm{X}_{1},\bm{X}_{2}\right) $ is sampled from the same population as $\left( Y_{1},Y_{2}\right) $ so that each panel updates  its parameter densities from $\pi _{i}(\theta _{i}\ )$ to $\pi_{i}(\theta _{i}\ |\ \bm{x}_{i})$, $i\in[2]$, using its own separate randomly sampled populations, $\bm{x}_{i}$, concerning $\theta_i$ alone. The supraBayesian inference is then sound and distributed. By adopting all these beliefs as her own, she acts as if she had sight of all the available information and had processed this information herself.

Note that in the example above  the  assumption $\theta_1\independent\theta_2$ is critical for distributivity. Suppose on the contrary that  $\pi (\theta _{2}\ |\ \theta _{1})$  needs to be a function of $\theta _{1}$. Then
$
\pi _{2}(\theta _{2}\ |\ \bm{x}_{1},\bm{x}_{2})=\int_{0}^{1}\pi _{2}(\theta _{2}\ |\ \theta_{1},\bm{x}_2)\pi_1(\theta _{1}\ |\ \bm{x}_1)\dr\theta _{2},
$
 where the prior dependence of $\theta_2$ on $\theta_1$ induces a dependence of $\theta_2$ on $\bm{x}_1$. So  $\pi _{2}(\theta _{2}\ |\ \bm{x}_{1},\bm{x}_{2})\neq \pi _{2}(\theta _{2}\ |\ \bm{x}_{2})$ in general. By devolving inference to the two panels who learn autonomously,  the supraBayesian cannot act as a single Bayesian would by using $\pi _{2}(\theta _{2}\ |\ \bm{x}_{2})$. It follows that the system is no longer sound, although when supporting evidence remains unseen the supraBayesian  appears to act coherently.  She is, implicitly,  assuming that $\theta_1 \independent \theta_2$, which is contrary to the reasoning $G_2$ would want to provide.

Perhaps more importantly, even if global independence is justified a priori, the assumption that data collected by the two panels and individually used to adjust their beliefs does not inform both parameters is also a critical one.  Suppose that $G_{1}$ and $G_{2}$ both see their respective margin concerning the experiment in Table \ref{experiment} and each uses this experiment to update its  marginal distribution on $\theta _{i}$, $i\in[2]$.
\begin{table}
\[
\begin{array}{ccccc}
Y_{1}\backslash Y_{2} & 0 & 1 &  &  \\ 
0 & 5 & 45 & 50 & n-x_{1} \\ 
1 & 45 & 5 & 50 & x_{1} \\ 
& 50 & 50 & 100 &  \\ 
& n-x_{2} & x_{2} &  &
\end{array}
\]
\caption{Experiment of the poisonous food example in Section \ref{sec:illustrations}.\label{experiment}}
\end{table} 
If both began with a prior symmetric about $0.5$ and the structural consensus included $\theta _{1}\independent \theta _{2}$ then the supraBayesian would assign $\mathbb{E}(Y_1Y_2\ |\ x_1,x_2)=\mathbb{E}(Y_1\ |\ x_1)\mathbb{E}(Y_2\ |\ x_2)=0.25$.
This inference contrasts with inferences the supraBayesian would make on seeing the whole table and assuming $\theta _{1}\independent \theta _{2}$ a priori. With a fairly uninformative prior on the two margins, her judgement would be approximately $0.05$, five times smaller than above. If conversely it were only possible to see the table of randomly sampled counts associated with $Y_{1}Y_{2}$ and the supraBayesian used this information directly, for example by introducing a uniform prior on $ P(Y_1Y_2=1)$, then her  posterior mean would be approximately $0.05$. However, observations have made the  independence $\theta_1\independent\theta_2$ no longer formally valid and the distributivity of the system is destroyed if this data is accommodated. 

Thus considerable care needs to be exercised before a modular system can be expected to work reliably, even in the simplest of networks.  In the next section we  prove some conditions which ensure the supraBayesian inference is distributed and sound.  

\section{Coherence and modularization}
\subsection{Information and admissibility protocols}
\label{ssec:admis}
Suppose panels are progressively informed by new collections of evidence.  In practice, within the totality of information conceivably available at time $t$, usually only a subset, the admissible evidence, is of sufficient quality and has suitable form to be used. The sort of information excluded or delayed admittance might include evidence whose relevance is ambiguous or of a type which might introduce insurmountable computational challenges. An admissibility protocol is therefore needed to avoid the difficulties illustrated in Section \ref{sec:illustrations}. 

Let $I_{0}^{t}$ denote all the admissible evidence which is common knowledge to all panel members at time $t$. Let $I_{ij}^{t}$ denote the subset of the admissible evidence panel $G_i$ would use at time $t$ if acting autonomously to assess their beliefs about $\boldsymbol{\theta }_{j}$,  $i,j\in[m]$.  We define the admissible evidence as $I_{+}^{t} \triangleq \{ I_{ij}^{t}:i,j\in[m]\}\cup I_0^t$ and $I_{\ast}^{t}\triangleq \{ I_{ii}^{t}:i\in[m]\}$ to be the subset of the admissible evidence each panel $G_i$ would use to update $\bm{\theta}_i$, $i\in[m]$.

Although not explicitly stated, protocols for the selection of good quality evidence are often used  in several domains. A notable example is the Cochrane systematic review \citep{Higgins2008} developed to pare away information which might be
ambiguous and potentially distort evidence concerning medical treatments, through a trusted set of principles relevant to the domain. This may seem restrictive, but in practical applications the need to be selective about experiments that can provide evidence of an acceptable quality is  widely acknowledged. 

For a modular system, the needs for such an admissibility protocol are even more important
because of its collective structure. So we assume that panels, both individually and corporately, agree on an appropriate protocol for selecting suitable evidence, mirroring Cochrane reviews in ways relevant to their domain. However, for our particular purposes one additional requirement is needed in this setting: the chosen admissibility protocol must also ensure distributivity. For we have already argued that, if this is not the case, then the supraBayesian inference is either dependent on arbitrary assumptions and difficult to calculate or, if distributivity is forced, incoherent. 

\subsection{Conditional independence conditions for coherence}

Assume that all panellists, as represented collectively by the supraBayesian, agree that their inferences should obey the (qualitative) semi-graphoid axioms. These axioms are widely accepted as appropriate for reasoning about evidence when irrelevance statements are read as conditional independences: Bayesian systems always respect these properties \citep{Studeny2006}. The four properties below ensure the soundness of a supraBayesian inference within an admissibility protocol. Although these are in no sense automatic, they are, nevertheless,  satisfied by a very diverse collection of models and information sets and can be checked for their plausibility in any given context \citep{Leonelli2015a}. Let $\bm{\theta }_{i^{-}}\triangleq \left( \theta _{j}\right)_{j\in[m]\setminus \{i\}}$.  

\begin{definition}
\label{def:cond}
An admissibility protocol is said delegable at time $t$ if 
\begin{equation}
I_{+}^{t}\independent \boldsymbol{\theta }\ |\ I_{0}^{t},I_{\ast }^{t},
\label{delegatable}
\end{equation}
separately informed at time $t$ if 
\begin{equation}
I_{ii}^{t}\independent \boldsymbol{\theta }_{i^{-}}\ |\ I_{0}^{t},\boldsymbol{\theta }_{i}, \label{sep inform}
\end{equation}
cutting at time $t$ if 
\begin{equation}
I_{\ast }^{t}\independent \boldsymbol{\theta }_{i}\ |\ I_{0}^{t},I_{ii}^{t},
\boldsymbol{\theta }_{i^{-}}, \label{cutting}
\end{equation}
commonly separated at time $t$ if 
\begin{equation}
\independent _{i\in[m]}\boldsymbol{\theta }_{i}\ |\ I_{0}^{t}.  \label{commonly sep}
\end{equation}
\end{definition}

 An admissibility protocol is delegable when the admissible evidence $I_{+}^{t}$ is the union of the evidence $I_{0}^{t}$ shared by all panels plus the evidence $I_{\ast}^{t}$ each panel has about its own domain of expertise. This condition is ensured if panels are working collaboratively rather than competitively. Alternatively the protocol could itself simply demand that $I_{+}^{t}=\left\{I_{0}^{t},I_{\ast }^{t}\right\} $.
 
  For the separately informed condition, pieces of evidence $G_{i}$ might collect individually cannot be informative about $\bm{\theta}_{i^{-}}$ once the domain experts' evidence has been fed in. This is satisfied when a new piece of information could only be added to $ I_{ii}^{t}$  if the evidence it provided about $\boldsymbol{\theta }_{i}$ would not depend on $\boldsymbol{\theta }_{i^{-}}$. Although commonly valid \citep{Leonelli2015a}, one context where  it is violated is when sampling is not ancestral \citep{Smith2010}: sometimes due to a hidden confounder across the different models of the system.

For cutting, once $\left\{ I^t_{0},I_{ii}^{t}\right\} $ is known, no-one believes that a panel $G_{j}$ has used any information that $ G_{i}$ might also want to use to adjust its beliefs about $\boldsymbol{ \theta }_{i}$. This captures what we might mean by `panel of experts'. One situation when the cutting condition is violated is if  a panel $G_{i}$ marginalises out a parameter in $\boldsymbol{\theta }_{j}$ to accommodate a piece of evidence, making $\bm{\theta}_i$ and $\bm{\theta}_j$ dependent on each other. 

When parameters are commonly separated the information that everyone shares separates the parameters. We note that this  is nearly always  assumed in the context of practical applications of graphical models \citep{Dawid1993}. 

The following result, analogous to \citet{Goldstein1996} about the use of linear Bayes methods and a single agent, can now be proved. 

\begin{theorem}
Under structural consensus, the supraBayesian inference is sound and distributed if the admissibility protocol is delegable, separately informed, cutting and commonly separated. 
\label{theo:gold}
\end{theorem}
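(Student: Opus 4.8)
The plan is to compute the supraBayesian's joint parameter density $\pi(\boldsymbol{\theta}\mid\cdot)$ explicitly, manipulating the four conditional independences of Definition \ref{def:cond} using only the (qualitative) semi-graphoid axioms — symmetry, decomposition, weak union and contraction; no appeal to the intersection property, and hence no positivity assumption beyond what the semi-graphoid axioms already encode, should be needed. Throughout I would fix the time $t$ and abbreviate $I_0=I_0^t$, $I_\ast=I_\ast^t$, $I_{ii}=I_{ii}^t$, writing $I_{i^{-}}=\{I_{jj}:j\in[m]\setminus\{i\}\}$, so that conditioning on $I_\ast$ coincides with conditioning on $(I_{ii},I_{i^{-}})$.

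First I would invoke delegability. Since $I_0\subseteq I_+^t$ and $I_\ast\subseteq I_+^t$, conditioning on $I_+^t$ is the same as conditioning on $I_0$, $I_\ast$ and the off-diagonal pieces $\{I_{ij}:i\neq j\}$; delegability says precisely that the latter are irrelevant for $\boldsymbol{\theta}$, so $\pi(\boldsymbol{\theta}\mid I_+^t)=\pi(\boldsymbol{\theta}\mid I_0,I_\ast)$. The whole problem is thereby reduced to showing that $\pi(\boldsymbol{\theta}\mid I_0,I_\ast)$ factorises across the panels.

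Next I would prove, for each $i$, the single statement $\boldsymbol{\theta}_i\independent(\boldsymbol{\theta}_{i^{-}},I_{i^{-}})\mid I_0,I_{ii}$. Commonly separated gives in particular $\boldsymbol{\theta}_i\independent\boldsymbol{\theta}_{i^{-}}\mid I_0$; combining this (symmetrised) with separately informed by contraction yields $\boldsymbol{\theta}_{i^{-}}\independent(\boldsymbol{\theta}_i,I_{ii})\mid I_0$, and weak union then gives $\boldsymbol{\theta}_i\independent\boldsymbol{\theta}_{i^{-}}\mid I_0,I_{ii}$. Decomposing cutting onto $I_{i^{-}}$ and symmetrising gives $\boldsymbol{\theta}_i\independent I_{i^{-}}\mid I_0,I_{ii},\boldsymbol{\theta}_{i^{-}}$, and a second contraction delivers the claimed statement. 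I would then fix an order $1,\dots,m$ for the panels, expand $\pi(\boldsymbol{\theta}\mid I_0,I_\ast)=\prod_{i=1}^m\pi(\boldsymbol{\theta}_i\mid\boldsymbol{\theta}_{1:i-1},I_0,I_\ast)$ by the chain rule (with $\boldsymbol{\theta}_{1:i-1}=(\boldsymbol{\theta}_1,\dots,\boldsymbol{\theta}_{i-1})$), and for each $i$ strip $\boldsymbol{\theta}_{1:i-1}$ from the conditioning: weak union and decomposition applied to the per-panel statement give $\boldsymbol{\theta}_i\independent I_{i^{-}}\mid I_0,I_{ii},\boldsymbol{\theta}_{1:i-1}$, which together with $\boldsymbol{\theta}_i\independent\boldsymbol{\theta}_{1:i-1}\mid I_0,I_{ii}$ (decomposition of the per-panel statement) and one further contraction yields $\boldsymbol{\theta}_i\independent(\boldsymbol{\theta}_{1:i-1},I_{i^{-}})\mid I_0,I_{ii}$, i.e. $\pi(\boldsymbol{\theta}_i\mid\boldsymbol{\theta}_{1:i-1},I_0,I_\ast)=\pi(\boldsymbol{\theta}_i\mid I_0,I_{ii})$. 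Multiplying over $i$ and using the first step,
\[
\pi(\boldsymbol{\theta}\mid I_+^t)=\prod_{i=1}^m\pi(\boldsymbol{\theta}_i\mid I_0^t,I_{ii}^t).
\]

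Finally I would read off the conclusions. The $i$-th factor is precisely the density panel $G_i$ obtains by autonomously updating its panel density $\Pi_i^\theta$ using the evidence $I_{ii}^t$ it would apply to $\boldsymbol{\theta}_i$ together with the common knowledge $I_0^t$; variational independence, $\Theta=\times_{i\in[m]}\Theta_i$, makes the product a genuine density on $\Theta$. Hence the supraBayesian's marginal for each $\boldsymbol{\theta}_i$ equals this autonomously updated panel density, while her sample model for $\bm{Y}_i$ is $\Pi_i^{y|\theta}$ by the structural consensus, so inference is sound; and her entire parameter posterior is the product of the panels' autonomously computed beliefs, so inference is distributed. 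The delegability reduction and this final interpretation are routine; the delicate part — and, I expect, the only place where an error could hide — is the semi-graphoid bookkeeping in the middle, where contraction, weak union and decomposition must be threaded in exactly the right order and one must check that the intersection property is never smuggled in.
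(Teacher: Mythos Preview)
Your proposal is correct and follows essentially the same route as the paper. The key intermediate step in both arguments is the same conditional independence $\boldsymbol{\theta}_i\independent(\boldsymbol{\theta}_{i^-},I_\ast^t)\mid I_0^t,I_{ii}^t$ (the paper's equation labelled \texttt{proof:sd}), derived by the same sequence: common separability plus separately informed via contraction, then weak union, then contraction with cutting. The only organisational differences are that you invoke delegability at the start to reduce $I_+^t$ to $(I_0^t,I_\ast^t)$ and then read off an explicit product factorisation of $\pi(\boldsymbol{\theta}\mid I_+^t)$ via the chain rule, whereas the paper applies delegability at the end and phrases the conclusion as two target conditional independences, $\boldsymbol{\theta}_i\independent\boldsymbol{\theta}_{i^-}\mid I_+^t$ and $\boldsymbol{\theta}_i\independent I_+^t\mid I_0^t,I_{ii}^t$; these are equivalent packagings of the same content.
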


\begin{proof}
The result follows if, at any time $t$, the supraBayesian holds panel independent beliefs
\begin{equation}
\bm{\theta }_{i}\independent \bm{\theta }_{i^{-}}\;|\;I_{+}^t,
\label{indep components}
\end{equation}
and if, when assessing $\bm{\theta }_{i}$, she only uses the information $G_{i}$ would use if acting autonomously
\begin{equation}
\bm{\theta }_{i}\independent I_{+}^t\;|\;I_{0}^t,I_{ii}^t.  \label{indep updating}
\end{equation}
These conditions are then sufficient for soundness and distributivity. Because even if all panels could share each other's information then they would come to the same assessment. 

To prove condition (\ref{indep components}) note that from common separability it follows that $\bm{\theta}_i \independent \bm{\theta}_{i^{-}} \;|\; I_{0}^t$, which combined with the separately informed condition in equation (\ref{sep inform}) through perfect decomposition and using the symmetric property of semi-graphoids axioms gives $
\bm{\theta}_{i^{-}} \independent I_{ii}^t,\bm{\theta}_i\;|\; I_{0}^t$.
Using perfect decomposition and symmetry again, it follows that $\bm{\theta}_i\independent \bm{\theta}_{i^{-}}\;|\; I_{ii}^t,I_{0}^t$. When combined with the cutting condition in (\ref{cutting}) via perfect decomposition this gives 
\begin{equation}
\label{proof:sd}
\bm{\theta}_i\independent I_*^t,\bm{\theta}_{i^{-}}\;|\; I_{0}^t,I_{ii}^t.
\end{equation}
 Then again by perfect decomposition and since $I_{ii}^t$ is a function of $I_*^t$ we have that
\begin{equation}
\label{proof:sound2}
\bm{\theta}_i\independent \bm{\theta}_{i^{-}}\;|\; I_{0}^t, I_{*}^t.
\end{equation}
Using perfect decomposition, the delegatable condition in equation (\ref{delegatable}) can be written as
\begin{equation}
I_+^t\independent \bm{\theta}_{i}\;|\; I_{0}^t,I_*^t,\bm{\theta}_{i^{-}}. \label{proof:sound3}
\end{equation}
Combining via perfect decomposition equations (\ref{proof:sound2}-\ref{proof:sound3}) and using symmetry we have that $\bm{\theta}_{i}\independent I_+^t,\bm{\theta}_{i^{-}}\;|\; I_{0}^t,I_*^t$, so by perfect decomposition it follows that $
\bm{\theta}_{i}\independent \bm{\theta}_{i^{-}}\;|\; I_{0}^t,I_*^t,I_+^t$, 
which, since $I_*^t$ and $I_{0}^t$ are functions of $I_+^t$, can be written as equation (\ref{indep components}). 

To show that equation (\ref{indep updating}) holds, note that another implication of delegatability in equation (\ref{delegatable}) by perfect decomposition is that, again using the fact that $I_{ii}^t$ is a function of $I_*^t$,
\begin{equation}
\bm{\theta}_i\independent I_{+}^t\;|\; I_{0}^t,I_{ii}^t,I_*^t.\label{proof:sound41}
\end{equation}
Noting that by perfect decomposition equation (\ref{proof:sd}) implies that
\begin{equation}
\bm{\theta}_i\independent I_*^t \;|\; I_{0}^t,I_{ii}^t, \label{proof:sound5}
\end{equation}
from equations (\ref{proof:sound41}) and (\ref{proof:sound5}) by perfect decomposition it follows that $\bm{\theta}_i  \independent I_+^t,I_*^t\;|\; I_{0}^t,I_{ii}^t,$. This, since $I_*^t$ is a function of $I_+^t$, is equivalent to equation (\ref{indep updating}).
\end{proof}

\subsection{Likelihood separation for modularized inference}
\label{sec:lik}
We now focus our attention onto  probabilistic systems. Suppose that the supraBayesian believes  $\independent _{i\in[m]}\boldsymbol{\theta }_{i}\ |\ I_{+}^{0}$ and that the only evidence available to panels is in the form of datasets $ \boldsymbol{x}^{t}=\left\{ \boldsymbol{x}_{\tau }:\tau \leq t\right\} $ which then populate  $I_{+}^{t}$. The features that ensure soundness and distributivity can be expressed in terms of the separability of a likelihood. Let $l(\boldsymbol{\theta }\ |\ \boldsymbol{x}^{t})$ denote a likelihood over the parameter $\boldsymbol{\theta }$ of the distribution of $\boldsymbol{Y}$ given $\boldsymbol{x}^{t}$. 

\begin{definition}
An admissibility protocol is said panel separable if
\begin{equation*}
l(\boldsymbol{\theta }\ |\ \boldsymbol{x}^{t})=\prod_{i\in[m]}l_{i}(\boldsymbol{\theta }_{i}\ |\ \boldsymbol{t}_{i}(\boldsymbol{x}^{t})),
\end{equation*}
where $l_{i}(\boldsymbol{\theta }_{i}\ |\ \boldsymbol{t}_{i}(\boldsymbol{x}^{t}))$ is a function of $\boldsymbol{\theta }$ only through $\boldsymbol{\theta }_{i}$ and $\boldsymbol{t}_{i}(\boldsymbol{x}^{t})$ is a statistic of $\boldsymbol{x}^{t}$  formally accommodated by $G_{i}$ into $I_{ii}^{t}$ to form its own posterior assessment of $\boldsymbol{\theta }_{i}$, $i\in[m]$.
\end{definition}

We now have the following theorem that gives good practical guidance about when and how soundness and delegability can be preserved over time, as new datasets inform the composite.

\begin{theorem}
Under structural consensus, suppose the supraBayesian inference is sound and distributed at time $t=0$. Provided the prior over $\boldsymbol{\theta }$ is absolutely continuous with respect to Lebesgue measure, the supraBayesian inference is sound and distributed at time $t>0$ if and only if the admissibility protocol is panel separable. \label{theo:seplik}
\end{theorem}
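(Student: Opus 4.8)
The plan is to recast soundness and distributivity, in this purely data-driven setting, as the pair of conditional independence relations that drive the proof of Theorem~\ref{theo:gold}: the panel-independence statement $\bm{\theta}_i\independent\bm{\theta}_{i^{-}}\mid I_{+}^{t}$ of equation (\ref{indep components}) and the restricted-updating statement $\bm{\theta}_i\independent I_{+}^{t}\mid I_{0}^{t},I_{ii}^{t}$ of equation (\ref{indep updating}), required for every $i\in[m]$; one direction of this equivalence is the sufficiency already proved for Theorem~\ref{theo:gold}, and the converse is straightforward from the meaning of soundness and distributivity once the only evidence is the datasets $\bm{x}^{t}$ generating $I_{+}^{t}$ (the supraBayesian's marginal for $\bm{\theta}_i$ is then $G_i$'s autonomous posterior and her joint is the product of these marginals). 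Because the prior over $\bm{\theta}$ is absolutely continuous with respect to Lebesgue measure, every distribution in sight admits a density, so this pair of relations is equivalent to the single factorization $\pi(\bm{\theta}\mid\bm{x}^{t})=\prod_{i\in[m]}\pi_i\!\left(\bm{\theta}_i\mid I_{0}^{t},I_{ii}^{t}\right)$ of the posterior density; and the hypotheses that $\independent_{i\in[m]}\bm{\theta}_i\mid I_{+}^{0}$ and that inference is sound and distributed at $t=0$ say exactly that the baseline density factorizes, $\pi(\bm{\theta})=\prod_{i\in[m]}\pi_i(\bm{\theta}_i)$ (all densities taken conditional on $I_{+}^{0}$). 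All manipulations are carried out on the set $\{\pi(\bm{\theta})>0\}$, which carries full prior and posterior mass.

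For the \emph{if} direction, assume the protocol is panel separable. Combining the factorized prior with the factorized likelihood, Bayes' theorem gives
\begin{equation*}
\pi(\bm{\theta}\mid\bm{x}^{t})\;\propto\;\pi(\bm{\theta})\,l(\bm{\theta}\mid\bm{x}^{t})\;=\;\prod_{i\in[m]}\pi_i(\bm{\theta}_i)\,l_i\!\left(\bm{\theta}_i\mid\bm{t}_i(\bm{x}^{t})\right),
\end{equation*}
a product of nonnegative functions each depending on $\bm{\theta}$ only through $\bm{\theta}_i$; since the normalizing constant then also factorizes across $i$, the posterior is a product density and the $\bm{\theta}_i$ are mutually independent given $I_{+}^{t}$, which is equation (\ref{indep components}). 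Moreover the $i$-th factor depends on the data only through $\bm{t}_i(\bm{x}^{t})$, which by definition is accommodated into $I_{ii}^{t}$, so the marginal posterior of $\bm{\theta}_i$ is $\sigma(I_{0}^{t},I_{ii}^{t})$-measurable, which is equation (\ref{indep updating}). Soundness and distributivity at time $t$ then follow from the sufficiency part of the proof of Theorem~\ref{theo:gold}.

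For the \emph{only if} direction, assume the supraBayesian inference is sound and distributed at time $t$, so that equations (\ref{indep components}) and (\ref{indep updating}) hold. Then the posterior density factorizes as $\pi(\bm{\theta}\mid\bm{x}^{t})=\prod_{i\in[m]}\pi_i\!\left(\bm{\theta}_i\mid\bm{t}_i(\bm{x}^{t})\right)$, where $\bm{t}_i(\bm{x}^{t})$ denotes a version of the statistic through which $I_{ii}^{t}$ enters $G_i$'s autonomous posterior for $\bm{\theta}_i$. Dividing by the factorized prior density exhibits a version of the likelihood as
\begin{equation*}
l(\bm{\theta}\mid\bm{x}^{t})\;=\;c(\bm{x}^{t})\prod_{i\in[m]}\frac{\pi_i\!\left(\bm{\theta}_i\mid\bm{t}_i(\bm{x}^{t})\right)}{\pi_i(\bm{\theta}_i)}\;=\;c(\bm{x}^{t})\prod_{i\in[m]}l_i\!\left(\bm{\theta}_i\mid\bm{t}_i(\bm{x}^{t})\right),
\end{equation*}
with $c(\bm{x}^{t})$ free of $\bm{\theta}$; absorbing $c(\bm{x}^{t})$ into a single factor, which is legitimate since the likelihood is defined only up to a $\bm{\theta}$-free multiple, shows the protocol is panel separable.

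The step I expect to be the main obstacle is this last passage. One must handle that the likelihood is identified only up to a multiplicative function of the data, and that the ratio of posterior to prior is a valid version of it only once the support of the prior is controlled; this is precisely where absolute continuity with respect to Lebesgue measure does its work, guaranteeing that the relevant densities exist and that the exceptional set on which the division fails is negligible. One must also argue that the statistics $\bm{t}_i$ produced by the factorization coincide with those the admissibility protocol actually permits $G_i$ to place in $I_{ii}^{t}$, rather than some finer function of $\bm{x}^{t}$ — that is, that the data reduction forced by soundness and distributivity is exactly the reduction the protocol sanctions. Establishing that the factorization transports faithfully, in both directions, between the semi-graphoid formulation underlying Theorem~\ref{theo:gold} and the likelihood formulation here is the technical heart of the argument.
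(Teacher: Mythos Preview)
Your proposal is correct and follows essentially the same route as the paper: both directions rest on the factorization of the posterior via Bayes' theorem applied to the product prior guaranteed by the $t=0$ hypotheses. The only cosmetic difference is that the paper argues the \emph{only if} direction by contrapositive (if the likelihood fails to factorize on a set of positive prior measure, then the posterior fails to factorize there and soundness is lost), whereas you argue it directly by dividing the factorized posterior by the factorized prior; your closing remarks about the role of absolute continuity and the identification of the statistics $\bm{t}_i$ with those sanctioned by the protocol are apt and in fact make explicit points the paper leaves implicit.
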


\begin{proof}
Under the initial hypotheses, by Theorem \ref{theo:gold}, the prior joint density can be written in a product form  $\pi (\bm{\theta })=\prod_{i\in[m]}\pi _{i}(\bm{\theta }_{i}).$ It follows that under this admissibility protocol
$
\pi (\bm{\theta }\;|\;\bm{x}^{t})=\prod_{i\in[m]}\pi_{i}(\bm{\theta }_{i},\bm{t}_{i}(\bm{x}^{t}))$, 
where $\pi _{i}(\bm{\theta }_{i},\bm{t}_{i}(\bm{x}^{t}))\propto l_{i}(\bm{\theta }_{i}\;|\;\bm{t}_{i}(\bm{x}^{t}))\pi _{i}(\bm{\theta }_{i}).$ By hypothesis $\pi _{i}(\bm{\theta }_{i},\bm{t}_{i}(\bm{x}^{t}))$ is adopted by the supraBayesian, thus guaranteeing soundness. In particular we have that 
\begin{equation*}
\independent_{i\in[m]}\bm{\theta }_{i}\;|\;I_{0}^{t},I_{+}^{t}\;\;\Longleftrightarrow\; \;\independent_{i\in[m]}\bm{\theta }_{i}\;|\;I_{0}^{t},I_{\ast }^{t},
\end{equation*}
where $I_{+}^{t} =\left\{I_{0}^{t},I_{\ast}^{t}\right\} =\left\{ I_{0}^{0},I_{\ast }^{0},\bm{x}^{t}\right\}$ and, for $i\in[m]$, $\left\{ I_{0}^{t},I_{ii}^{t}\right\} =\left\{ I_{0}^{t},I_{ii}^{0},\bm{t}_{i}(\bm{x}^{t})\right\}$. Since by definition $\bm{t}_{i}(\bm{x}^{t})$ is known to $G_{i}$, $i\in[m]$,  the system is also distributed. 

Finally note that if $
l(\bm{\theta }\;|\;\bm{x}^{t})\neq\prod_{i\in[m]}l_{i}(\bm{\theta }_{i}\;|\;\bm{t}_{i}(\bm{x}^{t}))$ on a set $A$ of non zero prior measure, then the posterior density $\pi_{A}(\bm{\theta }\;|\;\bm{x}^{t})$ on $A$ has the property that, for all $\bm{\theta} \in A$,
$
\pi _{A}(\bm{\theta }\;|\;\bm{x}^{t})\neq\prod_{i\in[m]}\pi _{A,i}(\bm{\theta }_{i},\bm{t}_{i}(\bm{x}^{t})),
$ 
where $\pi_{A,i}$ denotes the density delivered by panel $G_i$ for the parameters it oversees in $A$. This means that panel parameters are a posteriori dependent. So in particular the density determined by the margins is not sound.
\end{proof}

By designing  a single experiment to be orthogonal over parameters $\boldsymbol{\theta }_{i}$  and $\boldsymbol{\theta }_{j}$, overseen by $G_i$ and $G_j$ respectively, inference is sound and distributed under the conditions of the theorem. From this single experiment  data can still be included in the admissible evidence. Note, however, that the converse demonstrates that some protocol is certainly needed to preserve  distributivity. 

\section{Conclusion}
The complexity of modern applications often require inference to be  modular. Here we have identified a set of conditions under which autonomous component distributions can be aggregated coherently to provide a unique probabilistic assessment. Many modelling tools commonly used, often in the form of a graphical model, can be set up to respect such conditions \citep{Leonelli2015a}. The novel framework we introduced here is currently being used in practice to address the critical application of food security, where a suite of diverse components need to be networked together in a coherent fashion to support policy makers \citep{Barons2018}. Interestingly when such systems are only used to inform a formal Bayesian decision analysis, it is possible to show that the conditions that lead to soundness and distributivity can be made  milder than the ones required in the fully inferential settings discussed in Theorems \ref{theo:gold} and \ref{theo:seplik} \citep{Leonelli2017}.

Of course, in some situations the likelihood separation needed to ensure the enduring formal distributivity breaks down. Then we need to fall back on approximate inferential methods that preserve distributivity. However preliminary results suggest that sensible approximate methods, whose form is similar to variational Bayes methods, can still work effectively provided that statistical diagnostics are employed to check that the structural consensus remains plausible in the light of the information available. 

\section*{Acknowledgement}
Research supported by EPSRC, JQS \& MJB were funded by EPSRC grant EP/K039628/1; JQS also supported by the Alan Turing Institute under EPSRC grant EP/N510129/.

\bibliographystyle{Chicago}
\bibliography{Bib.bib}

\begin{thebibliography}{}

\bibitem[\protect\citeauthoryear{Barons, Wright, and Smith}{Barons
  et~al.}{2018}]{Barons2018}
Barons, M.~J., S.~K. Wright, and J.~Q. Smith (2018).
\newblock Eliciting probabilistic judgments for integrating decision support
  systems.
\newblock In {\em Elicitation: The Science and Art of Structurung Judgment},
  pp.\  445--478.

\bibitem[\protect\citeauthoryear{B\'{e}al, Brasseur, Brankart, Ourmi\`{e}res,
  and Verron}{B\'{e}al et~al.}{2010}]{Beal2010}
B\'{e}al, D., P.~Brasseur, J.~M. Brankart, Y.~Ourmi\`{e}res, and J.~Verron
  (2010).
\newblock {Characterization of mixing erros in a coupled physical
  biogeochemical model of the North Atlantic: implications for nonlinear
  estimation using Gaussian anamorphosis}.
\newblock {\em Ocean Sci.\/}~{\em 6}, 247--262.

\bibitem[\protect\citeauthoryear{Blangiardo, Hansell, and
  Richardson}{Blangiardo et~al.}{2011}]{Blangiardo2011}
Blangiardo, M., A.~Hansell, and S.~Richardson (2011).
\newblock A {B}ayesian model of time activity data to investigate health effect
  of air pollution in time series studies.
\newblock {\em Atmos. Environ.\/}~{\em 45}, 379--386.

\bibitem[\protect\citeauthoryear{Dawid and Lauritzen}{Dawid and
  Lauritzen}{1993}]{Dawid1993}
Dawid, A.~P. and S.~L. Lauritzen (1993).
\newblock {Hyper Markov laws in the statistical analysis of decomposable
  graphical models}.
\newblock {\em Ann. Stat.\/}~{\em 21\/}(3), 1272--1317.

\bibitem[\protect\citeauthoryear{French, Cooke, and Vogt}{French
  et~al.}{1991}]{French1991}
French, S., R.~M. Cooke, and F.~Vogt (1991).
\newblock The use of expert judgement in the context of a physical model.
\newblock In {\em Bayesian Statistics IV}, pp.\  617--624.

\bibitem[\protect\citeauthoryear{Goldstein and O'Hagan}{Goldstein and
  O'Hagan}{1996}]{Goldstein1996}
Goldstein, M. and A.~O'Hagan (1996).
\newblock {Bayes linear sufficiency and systems of expert posterior
  assessments}.
\newblock {\em J. R. Statist. Soc. B\/}~{\em 58\/}(2), 301--316.

\bibitem[\protect\citeauthoryear{Goudie, Presanis, Lunn, De~Angelis, and
  Wernisch}{Goudie et~al.}{2018}]{Goudie2018}
Goudie, R. J.~B., A.~M. Presanis, D.~Lunn, D.~De~Angelis, and L.~Wernisch
  (2018).
\newblock Joining and splitting models with {M}arkov melding.
\newblock {\em Bayesian Analysis (to appear)\/}.

\bibitem[\protect\citeauthoryear{Higgins and Green}{Higgins and
  Green}{2008}]{Higgins2008}
Higgins, J. P.~T. and S.~Green (2008).
\newblock {\em Cochrane handbook for systematic reviews of interventions}.
\newblock Wiley.

\bibitem[\protect\citeauthoryear{Jacob, Murray, Holmes, and Robert}{Jacob
  et~al.}{2017}]{Jacob2017}
Jacob, P.~E., L.~M. Murray, C.~C. Holmes, and C.~P. Robert (2017).
\newblock Better together? statistical learning in models made of modules.
\newblock {\em arXiv:1708:08719\/}.

\bibitem[\protect\citeauthoryear{Johnson and Mengersen}{Johnson and
  Mengersen}{2012}]{Johnson2012}
Johnson, S. and K.~Mengersen (2012).
\newblock {Integrated {B}ayesian network framework for modeling complex
  ecological issues}.
\newblock {\em Integrat. Environ. Assess. Manag.\/}~{\em 8}, 480--90.

\bibitem[\protect\citeauthoryear{Leonelli}{Leonelli}{2015}]{Leonelli2015a}
Leonelli, M. (2015).
\newblock {\em Bayesian decision support in complex modular systems: an
  algebraic and graphical approach}.
\newblock Ph.\ D. thesis, University of Warwick.

\bibitem[\protect\citeauthoryear{Leonelli, Riccomagno, and Smith}{Leonelli
  et~al.}{2017}]{Leonelli2017}
Leonelli, M., E.~Riccomagno, and J.~Q. Smith (2017).
\newblock Coherent combination of probabilistic outputs for group decision
  making: an algebraic approach.
\newblock {\em {arXiv:1707.02070}\/}.

\bibitem[\protect\citeauthoryear{Leonelli and Smith}{Leonelli and
  Smith}{2013}]{Leonelli2013}
Leonelli, M. and J.~Q. Smith (2013).
\newblock {Using graphical models and multi-attribute utility theory for
  probabilistic uncertainty handling in large systems, with application to
  nuclear emergency management}.
\newblock In {\em ICDEW}, pp.\  181--192.

\bibitem[\protect\citeauthoryear{Leonelli and Smith}{Leonelli and
  Smith}{2015}]{Leonelli2015}
Leonelli, M. and J.~Q. Smith (2015).
\newblock Bayesian decision support for complex systems with many distributed
  experts.
\newblock {\em Ann. Oper. Res.\/}~{\em 235}, 517--542.

\bibitem[\protect\citeauthoryear{Li, Grassly, and Fraser}{Li
  et~al.}{2017}]{Li2017}
Li, L.~M., N.~C. Grassly, and C.~Fraser (2017).
\newblock Quantifying transmission heterogeneity using both pathogen
  phylogenies and incidence time series.
\newblock {\em Mol. Biol. Evol.\/}~{\em 34}, 2982--2995.

\bibitem[\protect\citeauthoryear{Liu, Bayarri, and Berger}{Liu
  et~al.}{2009}]{Liu2009}
Liu, F., M.~J. Bayarri, and J.~O. Berger (2009).
\newblock Modularization in {B}ayesian analysis, with emphasis on analysis of
  computer models.
\newblock {\em Bayesian Anal.\/}~{\em 4}, 119--150.

\bibitem[\protect\citeauthoryear{Mahoney and Laskey}{Mahoney and
  Laskey}{1996}]{Mahoney1996}
Mahoney, S.~M. and K.~B. Laskey (1996).
\newblock {Network engineering for complex belief networks}.
\newblock In {\em UAI}, pp.\  389--396.

\bibitem[\protect\citeauthoryear{Massa and Lauritzen}{Massa and
  Lauritzen}{2010}]{Massa2010}
Massa, M.~S. and S.~L. Lauritzen (2010).
\newblock Combining statistical models.
\newblock {\em Contemp. Math.\/}~{\em 516}, 239--259.

\bibitem[\protect\citeauthoryear{Massa and Riccomagno}{Massa and
  Riccomagno}{2017}]{Massa2017}
Massa, M.~S. and E.~Riccomagno (2017).
\newblock Algebraic representations of {G}aussian {M}arkov combinations.
\newblock {\em Bernoulli\/}~{\em 23}, 626--644.

\bibitem[\protect\citeauthoryear{Murphy and Topel}{Murphy and
  Topel}{2002}]{Murphy2002}
Murphy, K.~M. and R.~H. Topel (2002).
\newblock Estimation and inference in two-step econometric models.
\newblock {\em J. Bus. Econ. Stat.\/}~{\em 20}, 88--97.

\bibitem[\protect\citeauthoryear{Plummer}{Plummer}{2015}]{Plummer2015}
Plummer, M. (2015).
\newblock Cuts in {B}ayesian graphical models.
\newblock {\em Stat. Comput.\/}~{\em 25}, 37--43.

\bibitem[\protect\citeauthoryear{Smith}{Smith}{1996}]{Smith1996}
Smith, J.~Q. (1996).
\newblock {Plausible Bayesian games}.
\newblock In {\em Bayesian Statistics V}, pp.\  387--406.

\bibitem[\protect\citeauthoryear{Smith}{Smith}{2010}]{Smith2010}
Smith, J.~Q. (2010).
\newblock {\em Bayesian decision analysis: principles and practice}.
\newblock Cambridge University Press.

\bibitem[\protect\citeauthoryear{Studeny}{Studeny}{2006}]{Studeny2006}
Studeny, M. (2006).
\newblock {\em Probabilistic conditional independence structures}.
\newblock Springer.

\bibitem[\protect\citeauthoryear{Xiang}{Xiang}{2002}]{Xiang2002}
Xiang, Y. (2002).
\newblock {\em {Probabilistic reasoning in multiagent systems}}.
\newblock Cambridge University Press.

\end{thebibliography}

\end{document}